\numberwithin{equation}{section}
\newcommand{\Sm}{\mathcal{S}}   %logic
\newcommand{\Lm}{\mathcal{L}}  %language
\newcommand{\Fm}{\mathbf{Fm}}  %formulas
\newcommand{\Al}{\mathbf{A}}  %alebra
\newcommand{\BB}{\mathbf{B}}   % algebra
\newcommand{\Alg}{\mathbb{A}\mathrm{lg}}
\newcommand{\LL}{\mathcal{L}}
\newcommand{\SSS}{\mathcal{S}}
\newcommand{\la}{\langle}
\newcommand{\ra}{\rangle}
\newcommand{\s}{\mathcal{S}}
\newcommand{\Ls}{\mathcal{L}_{\s}}
\newcommand{\CRS}{\vdash_{\mathcal{S}}}
\newcommand{\aAA}{\boldsymbol{A}}
\newcommand{\Fi}{\mathrm{Fi}}
\newcommand{\bteor}{\begin{theorem}}
\newcommand{\eteor}{\end{theorem}}
\begin{document}
\mainmatter              		% start of the contribution
\pagestyle{headings}
\title{An Abstract Algebraic Logic View on Judgment Aggregation\thanks{The research of the second and third author has been made possible by the NWO Vidi grant 016.138.314, by the NWO Aspasia grant 015.008.054, and by a Delft Technology Fellowship awarded in 2013.}}
\author{Mar\'ia Esteban\inst{1}
\and
Alessandra Palmigiano\inst{2}
\and
 Zhiguang Zhao\inst{2}}
\institute{Departament de L\`ogica, Hist\`oria i Filosofia de la Ci\`encia, Facultat de Filosofia, Universitat de Barcelona\\
\email{mariaesteban.edu@gmail.com}
\and Faculty of Technology, Policy and Management, Delft University of Technology\\
\email{\{a.palmigiano, z.zhao-3\}@tudelft.nl}}

\maketitle
\thispagestyle{empty}

\begin{abstract}
In the present paper, we propose  Abstract Algebraic Logic (AAL) as a general logical framework for Judgment Aggregation. Our main contribution is a generalization of Herzberg's algebraic approach to characterization results in  on judgment aggregation and propositional-attitude aggregation, characterizing certain Arrovian classes of aggregators as Boolean algebra and MV-algebra homomorphisms, respectively.
 The characterization result of the present paper applies to  agendas of formulas of an arbitrary {\em selfextensional} logic.
This notion comes from AAL, and encompasses a vast class of logics, of which classical, intuitionistic, modal, many-valued and relevance logics are special cases. To each selfextensional logic $\Sm$, a unique class of algebras $\Alg\Sm$ is canonically associated by the general theory of AAL. We show that for any selfextensional logic $\Sm$ such that $\Alg\Sm$ is closed under direct products, any algebra in $\Alg\Sm$ can be taken as the set of truth values on which an aggregation problem can be formulated.
In this way, judgment aggregation on agendas formalized in  classical, intuitionistic, modal, many-valued and relevance logic can be uniformly captured as special cases.
This paves the way to the systematic study of a wide array of ``realistic agendas'' made up of complex formulas, the propositional connectives of which are interpreted in ways which depart from their classical interpretation.
This is particularly interesting given that, as observed by Dietrich \cite{Di10},  nonclassical (subjunctive) interpretation of logical connectives can provide a strategy for escaping impossibility results.\\
{\em Keywords:} Judgment aggregation; Systematicity; Impossibility theorems; Abstract Algebraic Logic; Logical filter; Algebra homomorphism.\\
{\em Math. Subject Class.} 91B14; 03G27.
\end{abstract}
\section{Introduction}

\paragraph{\bf\em Social choice and judgment aggregation.} The theory of {\em social choice} is the formal study of mechanisms for collective decision making, and investigates issues of philosophical, economic, and political significance, stemming from the classical Arrovian problem of how the preferences of the members of a group can be ``fairly'' aggregated into one outcome.

In the last decades, many results appeared generalizing the original Arrovian problem, which gave rise to a research area called {\em judgment aggregation} (JA) \cite{LP09}. While the original work of Arrow \cite{Ar63} focuses on preference aggregation, this can be recognized as a special instance of the aggregation of consistent judgments, expressed by each member of a group of individuals over a given set of logically interconnected propositions (the {\em agenda}): each proposition in the agenda is either accepted or rejected by each group member, so as to satisfy certain requirements of logical consistency. Within the JA framework, the Arrovian-type {\em impossibility results} (axiomatically providing  sufficient conditions for aggregator functions to turn into degenerate rules, such as dictatorship) are obtained as consequences of {\em characterization theorems} \cite{NP02}, which provide necessary and sufficient conditions for agendas to have aggregator functions on them satisfying given axiomatic conditions.

In the same logical vein, in \cite{DiLi10}, {\em attitude aggregation theory} was introduced; this direction has been further pursued in \cite{He13}, where a characterization theorem has been given for certain many-valued propositional-attitude aggregators as MV-algebra homomorphisms. %, and a well-known result from judgment aggregation has been derived from it as corollary.

\paragraph{\bf\em The ultrafilter argument and its generalizations.} Methodologically, the {\em ultrafilter argument} is the tool underlying the generalizations and unifications mentioned above. It can be sketched as follows: to prove impossibility theorems for finite electorates, one shows that the axiomatic conditions on the aggregation function force the set of all decisive coalitions to be an (ultra)filter on the powerset of the electorate. If the electorate is finite, this implies that all the decisive coalitions must contain one and the same (singleton) coalition: the oligarchs (the dictator). First employed in \cite{KiSo72} for a proof of Arrow's theorem alternative to the original one, this argument was applied to obtain elegant and concise proofs of impossibility theorems also in judgment aggregation  \cite{DM10}. More recently, it gave rise to characterization theorems, e.g.\ establishing a bijective correspondence between Arrovian aggregation rules and ultrafilters on the set of individuals \cite{HE12}. Moreover,  the ultrafilter argument has been generalized by Herzberg \cite{He08} to obtain a bijective correspondence between certain judgment aggregation functions and ultraproducts of profiles, and---using the well-known correspondence between ultrafilters and Boolean homomorphisms---similar correspondences have been established between Arrovian judgment aggregators and Boolean algebra homomorphisms \cite{He10}.

\paragraph{\bf\em  Escaping impossibility via nonclassical logics.} While much research in this area explored the limits of the applicability of  Arrow-type results, at the same time the question of how to `escape impossibility' started attracting increasing interest. In \cite{Di10}, Dietrich shows that impossibility results do not apply to a wide class of realistic agendas
once propositions of the form `if $a$ then $b$' are  modelled as {\em subjunctive} implications rather than material implications. Besides its theoretical value, this result is of practical interest, given that subjunctive implication models the meaning of if-then statements in natural language more accurately than  material implication.

\paragraph{\bf\em Aim.} A natural question arising in the light of Dietrich's result is how to highlight the role played by the logic (understood both as formal language and deductive machinery) underlying the given agenda in characterization theorems for JA.

The present paper focuses on {\em Abstract Algebraic Logic}  as a natural theoretical setting for Herzberg's results \cite{He08,He13}, and the theory of {\em (fully) selfextensional logics} as the  appropriate logical framework for a nonclassical interpretation of logical connectives, in line with the approach of \cite{Di10}.

\paragraph{\bf\em Abstract Algebraic Logic and selfextensional logics.} Abstract Algebraic Logic (AAL) \cite{FoJaPi03} is a forty-year old research field in mathematical logic. It was conceived as the framework for an algebraic approach to the investigation of classes of logics. Its main goal was establishing a notion of {\em canonical algebraic semantics} uniformly holding for classes of logics, and using it to systematically investigate (metalogical) properties of logics in connection with properties of their algebraic counterparts.

{\em Selfextensionality} is the metalogical property holding of those logical systems whose associated relation of logical equivalence on formulas is a congruence of  the term algebra.  W\'ojcicki \cite{Wo79} characterized  selfextensional logics as the logics which admit a so-called {\em referential semantics} (which is a general version of the well known possible-world semantics of modal and intuitionistic logics), and in \cite{JaPa06}, a characterization was given of the particularly well behaved subclass of the fully selfextensional logics in general duality-theoretic terms. This subclass includes many well-known logics, such as classical, intuitionistic, modal, many-valued and relevance logic. These and other results in this line of research (cf.\ e.g.\ \cite{Ja05,GJP10,Es13,EsJa14}) establish a systematic connection between possible world semantics and the logical account of intensionality.

\paragraph{\bf\em Contributions.}
In the present paper, we generalize and refine Herzberg's characterization result in \cite{He13} from the MV-algebra setting to any class of algebras canonically associated with some  selfextensional logic.

In  particular, the properties of agendas are formulated independently of a specific logical signature and are slightly different than those of Herzberg's setting.  In contrast with Herzberg's characterization result, which consisted of two slightly asymmetric parts, the two propositions which yield the characterization result in the present paper (cf.\ Propositions \ref{prop1} and \ref{prop2}) are symmetric. Aggregation of propositional attitudes modeled in classical, intuitionistic, modal, \L ukasiewicz and relevance logic can be uniformly captured as special cases of the present result.
 This makes it possible to fine-tune the expressive and deductive power of the formal language of the agenda, so as to capture e.g.\ intensional or vague statements.
 %This result paves the way to the systematic study of a wide array of ``realistic agendas'' of formulas the propositional connectives of which are interpreted in ways which depart from the classical interpretation. Conceptually, adopting nonclassical logics to formalize natural language statements makes it possible to more adequately capture the original meaning in different situations.  Technically, the nonclassical interpretation of the logical connectives is useful in the light of the fact that, as observed by Dietrich \cite{Di10}, it can provide a strategy to escape the impossibility results (for instance, in the case of \cite{Di10}, Dietrich escaped the impossibility results by interpreting conditional propositions of the form ``if a then b'' by means of subjunctive implication rather than classical implication).

\paragraph{\bf\em Structure of the paper} In Section \ref{sec:preliminaries} we give preliminaries on Abstract Algebraic Logic. In Section \ref{Sec:framework} we provide the formal framework for judgment aggregation. In Section \ref{Sec:Results} we prove the correspondence result. In Section \ref{Sec:Arrow} we state the impossibility theorem for judgment aggregation as a corollary, and discuss the setting of subjunctive implication.

\section{Preliminaries on Abstract Algebraic Logic}
\label{sec:preliminaries}
The present section collects the basic concepts of Abstract Algebraic Logic that we will use in the paper.
For a general view of AAL the reader is addressed to \cite{FoJa96} and the references therein.

%CONSEQUENCE OPERATIONS
\subsection{General approach.} As mentioned in the introduction, in AAL,  logics are not studied in isolation, and in particular,  investigation focuses on classes of logics and their identifying {\em metalogical properties}.  Moreover, the notion of {\em consequence} rather than the notion of {\em theoremhood} is taken as basic: consequently, {\em sentential logics}, the primitive objects studied in AAL, are defined as tuples $\s = \langle\Fm, \CRS\rangle$ where $\Fm$ is the algebra of formulas of type $\Ls$ over a denumerable set of variables $Var$, and $\CRS$ is a {\em consequence relation } on (the carrier of) $\Fm$ (cf.\ Subsection \ref{subsec:logics}).

\noindent This notion encompasses logics that are defined by any sort of proof-theoretic calculus (Gentzen-style, Hilbert-style, tableaux, etc.), as well as logics arising from some classes of (set-theoretic, order-theoretic, topological, algebraic, etc.) semantic structures, and in fact it allows to treat logics independently of the way in which they have been originally introduced. Another perhaps more common approach in logic takes the notion of theoremhood as basic and consequently sees logics as sets of formulas (possibly closed under some rules of inference). This approach is easily recaptured by the notion of sentential logic adopted in AAL: Every sentential logic $\s$ is uniquely associated with the set  $Thm(\s) = \{\varphi\in Fm\ |\ \emptyset\CRS\varphi\}$ of its {\em theorems}.

\subsection{Consequence operations}
For any set $A$, a \textit{consequence operation} (or closure operator) on $A$  is a map $C: \mathcal{P}(A) \to\mathcal{P}(A)$ such that for every $X, Y \subseteq A$: (1) $X \subseteq C(X)$, (2) if $X \subseteq Y $, then $C(X) \subseteq C(Y)$ and (3) $C(C(X)) = C(X)$. The closure operator $C$ is \textit{finitary} if in addition satisfies  (4) $C(X) = \bigcup\{C(Z): Z \subseteq X, Z \text{ finite}\}$. For any consequence operation $C$ on $A$, a set $X \subseteq A$ is $C$-\textit{closed} if $C(X) = X$. Let $\mathcal{C}_{C}$ be the collection of $C$-closed subsets of $A$.

For any set $A$, a \textit{closure system} on $A$ is a collection $\mathcal{C}\subseteq\mathcal{P}(A)$ such that $A\in \mathcal{C}$, and $\mathcal{C}$ is closed under intersections of arbitrary non-empty families. A closure system is \textit{algebraic} if it is closed under unions of up-directed\footnote{
For $\langle P,\leq\rangle$ a poset, $U\subseteq P$ is  \emph{up-directed} when for any $a,b\in U$ there exists $c\in U$ such that $a,b\leq c$. } families.

For any closure operator $C$ on $A$, the collection $\mathcal{C}_{C}$ of the $C$-closed subsets of $A$ is a closure system on $A$. If $C$ is finitary, then $\mathcal{C}_{C}$ is algebraic. Any closure system $\mathcal{C}$ on $A$ defines a consequence operation $C_{\mathcal{C}}$ on $A$ by setting
%:
$C_{\mathcal{C}}(X) = \bigcap\{Y \in \mathcal{C}:X \subseteq Y\}$
for every $X \subseteq A$. The $C_{\mathcal{C}}$-closed sets are exactly the elements of $\mathcal{C}$. Moreover, $\mathcal{C}$ is algebraic if and only if $C_{\mathcal{C}}$ is finitary.

%LOGICS
\subsection{Logics}\label{subsec:logics}
Let $\LL$ be a propositional language (i.e.\ a set of connectives, which we will also regard as a set  of function symbols) and let $\Fm_{\LL}$ denote the algebra of formulas (or term algebra) of $\LL$ over a denumerable  set $V$ of variables. Let $Fm_{\LL}$ be the carrier of the algebra $\Fm_{\LL}$. A \textit{logic} (or deductive system) of type $\LL$ is a pair $\SSS = \la \Fm_{\LL}, \vdash_{\SSS}\ra$ such that $\vdash_{\SSS}\subseteq\mathcal{P}(Fm_{\LL})\times Fm_{\LL}$ such that the operator $C_{\vdash_{\SSS}}:\mathcal{P}(Fm_{\LL}) \to \mathcal{P}(Fm_{\LL})$ defined by $$\varphi \in C_{\vdash_{\SSS}}(\Gamma)\; \; \text{ iff } \; \; \Gamma \vdash_{\SSS} \varphi$$ is a consequence operation with the property of  \textit{invariance under substitutions}; this means that for every substitution $\sigma$ (i.e.\ for every $\LL$-homomorphism $\sigma: \Fm_{\LL}\rightarrow\Fm_{\LL}$) and for every $\Gamma \subseteq Fm_{\LL}$,

$$\text{$\sigma[C_{\vdash_{\SSS}}(\Gamma)]\subseteq C_{\vdash_{\SSS}}(\sigma[\Gamma]).$}$$

For every $\SSS$, the relation $\vdash_{\SSS}$ is the {\em consequence} or {\em entailment} relation of $\SSS$. A logic is \textit{finitary} if the consequence operation $C_{\vdash_{\SSS}}$ is finitary. Sometimes we will use the symbol $\Ls$ to refer to the propositional language of a logic $\SSS$.

The  \textit{interderivability relation} of a logic $\SSS$  is the relation $\equiv_{\SSS}$ defined by $$\varphi  \equiv_{\SSS} \psi \;\; \text{ iff } \;\; \varphi \vdash_{\SSS} \psi \text{ and } \psi \vdash_{\SSS} \varphi.$$ $\SSS$ satisfies the \textit{congruence property} if $\equiv_{\SSS}$ is a congruence of $\Fm_{\LL}$.

%LOGICAL FILTERS
\subsection{Logical filters}
Let $\SSS$ be a logic of type $\LL$ and let $\aAA$ be an $\LL$-algebra (from now on, we will drop reference to the type $\LL$, and when we refer to an algebra or class of algebras in relation with $\SSS$, we will always assume that the algebra and the algebras in the class are of type $\LL$).

A subset $F \subseteq A$ is  an $\SSS$-\textit{filter} of $\aAA$ if for every $\Gamma\cup\{\varphi\}\subseteq Fm$ and every $h\in\mathrm{Hom}(\Fm_{\LL}, \aAA)$, $$\text{if } \Gamma \vdash_{\SSS}\varphi \text{ and } h[\Gamma] \subseteq F, \text{ then } h(\varphi) \in F.$$ The collection $\Fi_{\SSS}(\aAA)$ of the $\SSS$-filters of $\aAA$ is a closure system. Moreover, $\Fi_{\SSS}(\aAA)$ is an algebraic closure system if
$\SSS$ is finitary. The consequence operation associated with $\Fi_{\SSS}(\aAA)$ is denoted by $C_{\SSS}^{\aAA}$. For every $X \subseteq A$, the closed set
$C_{\SSS}^{\aAA}(X)$ is the $\SSS$-filter of $\aAA$ generated by $X$. If $\SSS$ is finitary, then $C_{\SSS}^{\aAA}$ is finitary for every algebra $\aAA$.

On the algebra of formulas  $\Fm$, the closure operator $C_{\SSS}^{\Fm}$ coincides with $C_{\vdash_{\SSS}}$ and the $C_{\SSS}^{\Fm}$-closed sets are exactly the $\SSS$-\textit{theories}; that is, the sets of formulas which are closed under the relation $\vdash_{\SSS}$.

%TARSKI CONGRUENCE AND SELFEXTENSIONAL LOGICS
\subsection{$\SSS$-algebras and selfextensional logics}\label{subsec:Salgebras}

One of the basic topics of AAL is how to associate in a uniform way a class of algebras with an arbitrary logic $\Sm$. According to contemporary AAL \cite{FJa09}, the canonical algebraic counterpart of $\Sm$  is the class $\Alg\Sm$, whose elements are called \emph{$\Sm$-algebras}. This class can be defined via the notion of Tarski congruence.

For any algebra $\Al$ (of the same type as $\Sm$) and any closure system $\mathcal{C}$ on $\Al$,
the \emph{Tarski congruence of $\mathcal{C}$ relative to $\Al$},  denoted by $\tilde{\mathbf{\Omega}}_{\Al}(\mathcal{C})$,
is the greatest congruence which is compatible with all $F\in \mathcal{C}$, that is, which does not relate elements of $F$ with elements which do not belong to $F$.
The Tarski congruence of the closure system consisting of all $\Sm$-theories relative to $\Fm$ is denoted by $\tilde{\mathbf{\Omega}}(\Sm)$. The quotient algebra $\Fm/\tilde{\mathbf{\Omega}}(\Sm)$ is called the \emph{Lindenbaum-Tarski algebra} of $\Sm$.

For any algebra $\Al$, we say that $\Al$ is an \emph{$\Sm$-algebra}  (cf.\ \cite[Definition 2.16]{FJa09}) if the Tarski congruence of $\Fi_{\SSS}(\aAA)$ relative to $\Al$ is the identity. It is well-known (cf.\ \cite[Theorem 2.23]{FJa09} and ensuing discussion) that $\Alg\Sm$ is closed under direct products. Moreover, for any logic $\Sm$, the Lindenbaum-Tarski algebra is an $\Sm$-algebra (see page 36 in \cite{FJa09}).

A logic $\Sm$ is \emph{selfextensional} when the relation of \emph{logical equivalence} between formulas
$$\varphi\ {\equiv_{\Sm}}\ \psi\quad\quad\mbox{ iff }\quad\quad\varphi\ {\vdash_{\Sm}}\ \psi\ \ \mbox{ and }\ \ \psi\ {\vdash_{\Sm}}\ \varphi$$
is a congruence relation of the formula algebra $\Fm$.
An equivalent definition of selfextensionality (see page 48 in \cite{FJa09}) is given as follows: $\Sm$ is selfextensional iff  the Tarski congruence $\tilde{\mathbf{\Omega}}(\Sm)$ and the relation of logical equivalence $\equiv_{\Sm}$ coincide. In such case the Lindenbaum-Tarski algebra reduces to ${\Fm/\equiv_{\Sm}}$. Examples of selfextensional logics besides classical propositional logic are intuitionistic logic, positive modal logic \cite{CeJa99}, the $\{\land, \lor\}$-fragment of classical propositional logic,  Belnap's four-valued logic \cite{Be77} the local consequence relation associated with Kripke frames, and the (order-induced) consequence relation associated with MV-algebras and defined by ``preserving degrees of truth'' (cf.\ \cite{Fo03}). Examples of non-selfextensional logics include linear logic, the (1-induced) consequence relation associated with MV-algebras and defined by ``preserving absolute truth'' (cf.\ \cite{Fo03}), and the global consequence relation associated with Kripke frames.

From now on we assume that $\Sm$ is a selfextensional logic and $\BB\in\Alg\Sm$. For any formula $\varphi\in Fm$, we say that $\varphi$ is \emph{provably equivalent} to a variable iff there exist a variable $x$ such that $\varphi\equiv x$.

\section{Formal framework}
\label{Sec:framework}
In the present section, we generalize Herzberg's algebraic framework for aggregation theory from MV-propositional attitudes to $\Sm$-propositional attitudes, where $\Sm$ is an arbitrary selfextensional logic. Our conventional notation is similar to \cite{He13}. Let $\Lm$ be a logical language which contains countably many connectives, each of which has arity at most $n$, and let $Fm$ be the collection of $\Lm$-formulas.

%THE AGENDA
\subsection{The agenda}\label{subsec:agenda}

The \emph{agenda} will be given by a set of formulas $X\subseteq \Fm$. Let $\bar{X}$ denote the closure of $X$ under the connectives of the language. Notice that for any constant $c\in\Lm$, we have $c\in \bar{X}$.

We want the agenda to contain a sufficiently rich collection of formulas.
In the classical case, it is customary to assume that the agenda contains at least two propositional variables.
In our general framework, this translates in the requirement that the agenda contains at least $n$ formulas that `behave' like propositional variables, in the sense that their interpretation is not constrained by the interpretation of any other formula in the agenda.

We could just assume that the agenda contains at least $n$ different propositional variables, but we will deal with a slightly more general situation, namely, we assume that the agenda is $n$-pseudo-rich:

\begin{definition}\label{def:pseudo:rich}
An agenda is $n$-\emph{pseudo-rich}, if it contains at least $n$ formulas $\{\delta_{1},\dots,\delta_{n}\}$ such that each $\delta_{i}$ is provably equivalent to $x_i$ for some set $\{x_1,\ldots, x_n\}$ of pairwise different variables.
\end{definition}

%ATTITUDE FUNCTIONS, PROFILES AND ATTITUDE AGGREGATORS
\subsection{Attitude functions, profiles and attitude aggregators}

An \emph{attitude function} is a function ${A\in \BB^{X}}$ which assigns each formula in the agenda to an element of the algebra $\BB$.

The \emph{electorate} will be given by some (finite or infinite) set $N$. Each $i\in N$ is called an \emph{individual}.

An \emph{attitude profile} is an $N$-sequence of attitude functions, i.e.\ $\vec{A}\in (\BB^{X})^{N}$.
For each $\varphi\in X$, we denote the $N$-sequence $\{A_{i}(\varphi)\}_{i\in N}\in \BB^{N}$ by $\vec{A}(\varphi)$.

An attitude aggregator is a function which maps each profile of individual attitude functions in some domain to a collective attitude function, interpreted as the set of preferences of the electorate as a whole.
Formally, an \emph{attitude aggregator} is a partial map $F:(\BB^{X})^{N} \nrightarrow \BB^{X}$.

%RATIONALITY
\subsection{Rationality}
Let the agenda contain formulas $\varphi_{1},\dots,\varphi_{m},g(\varphi_{1},\dots,\varphi_{m})\in X$, where $g\in\Lm$ is an $m$-ary connective of the language and $m\leq n$.
Among all attitude functions $A\in\BB^{X}$, those for which it holds that $A(g(\varphi_{1},\dots,\varphi_{n}))=g^{\BB}(A(\varphi_{1}),\dots,A(\varphi_{n}))$ are of special interest. In general, we will focus on attitude functions which are `consistent' with the logic $\Sm$ in the following sense.

We say that an attitude function $A\in \BB^{X}$ is \emph{rational} if it can be extended to a homomorphism $\bar{A}:\Fm_{/\equiv}\longrightarrow \BB$ of $\SSS$-algebras. In particular, if $A$ is rational, then it can be uniquely extended to $\bar{X}$, and we will implicitly use this fact in what follows.

We say that a profile $\vec{A}\in (\BB^{X})^{N}$ is \emph{rational} if $A_{i}$ is a rational attitude function for each $i\in N$.

We say that an attitude aggregator $F:(\BB^{X})^{N} \nrightarrow \BB^{X}$ is \emph{rational} if for all rational profiles $\vec{A}\in dom(F)$ in its domain, $F(\vec{A})$ is a rational attitude function. Moreover, we say that $F$ is \emph{universal} if $\vec{A}\in dom(F)$ for any rational profile $\vec{A}$. In other words, an aggregator is universal whenever its domain contains all rational profiles, and it is rational whenever it gives a rational output provided a rational input.

%DECISION CRITARIA AND SYSTEMATICITY
\subsection{Decision criteria and systematicity}

A \emph{decision criterion} for $F$ is a partial map $f:\BB^{N}\nrightarrow \BB$ such that for all $\vec{A}\in dom (F)$ and all $\varphi\in X$,
\begin{equation}\label{eq1}
F(\vec{A})(\varphi)= f(\vec{A}(\varphi)).
\end{equation}

As observed by Herzberg \cite{He13}, an aggregator is independent if the aggregate attitude towards any proposition $\varphi$ does not depend on the individuals attitudes towards propositions other than $\varphi$:

%INDEPENDENT
An aggregator $F$ is \emph{independent} if there exists some map $g:\BB^{N}\times X\nrightarrow \BB$ such that for all $\vec{A}\in dom(F)$, the following diagram commutes (whenever the partial maps are defined):

$$\xymatrix{
{X} && {\BB^{N}\times X} \ar[ll];[]^*+[o]{\vec{A}, id_{X}} \\
&& {\BB}  \ar[ull];[]_*+[o]{F(\vec{A})}  \ar[u];[]^*+[o]{g}}
$$

%SYSTEMATIC
An aggregator $F$ is \emph{systematic} if there exists some \emph{decision criterion} $f$ for $F$, i.e. there exists some map $f:\BB^{N}\nrightarrow \BB$ such that for all $\vec{A}\in dom(F)$, the following diagram commutes (whenever the partial maps are defined):

$$\xymatrix{
{X} & {\BB^{N}} \ar[l];[]^*+[o]{\vec{A}} \\
& {\BB}  \ar[ul];[]_*+[o]{F(\vec{A})}  \ar[u];[]^*+[o]{f}}
$$

Systematic aggregation is a special case of independent aggregation, in which the output of $g$ does not depend on the input in the second coordinate.
Thus, $g$ is reduced to a decision criterion $f:\BB^{N}\nrightarrow \BB$.

%STRONGLY SYSTEMATIC
An aggregator $F$ is \emph{strongly systematic} if there exists some decision criterion $f$ for $F$, such that for all $\vec{A}\in dom(F)$, the following diagram commutes (whenever the partial maps are defined):
	
$$\xymatrix{
{\bar{X}} & {\BB^{N}} \ar[l];[]^*+[o]{\vec{A}} \\
& {\BB}  \ar[ul];[]_*+[o]{F(\vec{A})}  \ar[u];[]^*+[o]{f}}
$$

Notice that the diagram above differs from the previous one in that the agenda $X$ is now replaced by its closure $\bar{X}$ under the connectives of the language. If $X$ is closed under the operations in $\Ls$, then systematicity and strong systematicity coincide.

%STRICTLY CONTINGENT
A formula $\varphi\in\Fm$ is \emph{strictly contingent} if for all $a\in\BB$ there exists some homomorphism $v:\Fm\rightarrow\BB$ such that $v(\varphi)=a$.
Notice that for any $n\geq 1$, any $n$-pseudo rich agenda (cf.\ Definition \ref{def:pseudo:rich}) always contains a strictly contingent formula.
Moreover, if the agenda contains some strictly contingent formula $\varphi$, then any universal systematic attitude aggregator $F$ has a unique decision criterion (cf.\ \cite[Remark 3.5]{He13}).\label{page:unique:decision:criterion}

Before moving on to the main section, we mention four definitions which appear in Herzberg's paper, namely that of \emph{Paretian} attitude aggregator (cf.\ \cite[Definition 3.7]{He13}), \emph{complex} and \emph{rich} agendas (cf.\ \cite[Definition 3.8]{He13}), and \emph{strongly systematizable} aggregators (cf.\ \cite[Definition 3.9]{He13}). Unlike the previous ones, these definitions rely on the specific MV-signature, and thus do not have a natural counterpart in the present, vastly more general setting. However, as we will see, our main result can be formulated independently of these definitions.
Moreover, a generalization of the Pareto condition follows from the assumptions of  $F$ being universal, rational and strongly systematic, as then it holds that for any constant $c\in\Ls$, and any $\varphi\in\Fm$, if $A_i(\varphi)=c$ for all $i\in N$, then $F(\vec{A})(\varphi)=c$.

\section{Results}
\label{Sec:Results}
%LEMMA1
\begin{lemma}
Let $X$ be an $n$-pseudo-rich agenda, $m\leq n$, $g\in \Lm$ be an $m$-ary connective and $a_{1},\dots,a_{m}\in \BB$.
Then there exist formulas $\delta_{1},\dots,\delta_{m}\in X$ in the agenda and a rational attitude function $A:X\longrightarrow \BB$ such that $A(\delta_{j})=a_{j}$ for each $j\in \{1,\dots,m\}$.
\end{lemma}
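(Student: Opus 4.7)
I plan to exhibit $A$ as the restriction to $X$ of a homomorphism $\bar{A}:\Fm/\equiv_\Sm\to\BB$ of $\Sm$-algebras, obtained by lifting a free assignment on the variables.

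First, invoke the $n$-pseudo-richness of $X$ (Definition \ref{def:pseudo:rich}) to fix $n$ formulas $\delta_1,\dots,\delta_n\in X$ and pairwise distinct variables $x_1,\dots,x_n$ with $\delta_i\equiv_\Sm x_i$ for each $i$. Since $m\leq n$, the first $m$ of these will serve as the $\delta_1,\dots,\delta_m$ required by the statement. Next, define a valuation $v:Var\to\BB$ by setting $v(x_j):=a_j$ for $j\leq m$ and sending every remaining variable to some fixed element of $\BB$; since $\Fm$ is absolutely free over $Var$, this valuation lifts uniquely to a homomorphism $h:\Fm\to\BB$.

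The crucial step is to show that $h$ descends along the quotient map $\pi:\Fm\to\Fm/\equiv_\Sm$, yielding a homomorphism of $\Sm$-algebras $\bar{A}:\Fm/\equiv_\Sm\to\BB$ with $\bar{A}\circ\pi=h$. Here selfextensionality and the hypothesis $\BB\in\Alg\Sm$ come into play: for every $\Sm$-filter $F'\in\Fi_\Sm\BB$, the preimage $h^{-1}[F']$ is an $\Sm$-theory of $\Fm$ (directly from the definition of $\Sm$-filter applied to $h$); since $\Sm$ is selfextensional, $\equiv_\Sm$ coincides with the Tarski congruence $\tilde{\mathbf{\Omega}}(\Sm)$ and is therefore compatible with each such preimage. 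Combined with the fact that the Tarski congruence of $\Fi_\Sm\BB$ relative to $\BB$ is the identity, this forces $h(\varphi)=h(\psi)$ whenever $\varphi\equiv_\Sm\psi$, so $h$ factors as $\bar{A}\circ\pi$.

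Finally, set $A(\varphi):=\bar{A}([\varphi])=h(\varphi)$ for $\varphi\in X$. By construction $A$ is rational, and using $\delta_j\equiv_\Sm x_j$ one obtains $A(\delta_j)=h(\delta_j)=h(x_j)=a_j$ for each $j\leq m$. The main obstacle is the descent step: bridging ``indistinguishable by all $\Sm$-filters of $\BB$'' and genuine equality in $\BB$ draws on the AAL machinery behind the trivial-Tarski-congruence characterization of $\Alg\Sm$, together with selfextensionality; the rest of the argument is pure universal-algebraic bookkeeping.
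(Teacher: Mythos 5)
Your construction is essentially the paper's: both obtain $A$ as the restriction to $X$ of a homomorphism $\Fm/{\equiv_{\Sm}}\longrightarrow\BB$ determined by a free assignment of the $a_j$ to the variables $x_j$ (the paper simply asserts that a valuation $v:\Fm/{\equiv}\longrightarrow\BB$ with $v(\delta_i)=a_i$ exists, implicitly appealing to the universal mapping property of the Lindenbaum--Tarski algebra with respect to $\Alg\Sm$). The difference is that you attempt to prove the descent of $h$ along $\pi$, and your justification of that step has a genuine gap. What your filter-preimage argument actually establishes is that $h(\varphi)$ and $h(\psi)$ belong to exactly the same $\Sm$-filters of $\BB$, i.e.\ that the pair $(h(\varphi),h(\psi))$ lies in the \emph{Frege relation} of $\Fi_{\Sm}(\BB)$. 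But the Tarski congruence $\tilde{\mathbf{\Omega}}_{\BB}(\Fi_{\Sm}(\BB))$ is only the \emph{largest congruence contained in} that relation. The two coincide on the formula algebra precisely because $\Sm$ is selfextensional, but this coincidence is not inherited by an arbitrary $\BB\in\Alg\Sm$ (demanding it for all algebras is essentially what singles out the \emph{fully} selfextensional logics, which the paper explicitly treats as a proper subclass). So from ``$h(\varphi)$ and $h(\psi)$ are indistinguishable by all $\Sm$-filters'' together with ``the Tarski congruence of $\Fi_{\Sm}(\BB)$ is the identity'' you cannot yet conclude $h(\varphi)=h(\psi)$.

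The claim is true and the step can be repaired via the polynomial characterization of the Tarski congruence: $(a,b)\in\tilde{\mathbf{\Omega}}_{\BB}(\Fi_{\Sm}(\BB))$ iff for every term $\chi(x,\vec{y})$, every tuple $\vec{c}$ of parameters from $\BB$ and every $F\in\Fi_{\Sm}(\BB)$, one has $\chi^{\BB}(a,\vec{c})\in F$ iff $\chi^{\BB}(b,\vec{c})\in F$. Given $\varphi\equiv_{\Sm}\psi$ and such data, choose variables $\vec{q}$ not occurring in $\varphi$ or $\psi$ and modify $h$ on them so that they are sent to $\vec{c}$; this leaves $h(\varphi)$ and $h(\psi)$ unchanged. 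Since $\equiv_{\Sm}$ is a congruence of $\Fm$, we get $\chi(\varphi,\vec{q})\equiv_{\Sm}\chi(\psi,\vec{q})$, and now your filter-preimage argument, applied to these two interderivable formulas, shows that their images $\chi^{\BB}(h(\varphi),\vec{c})$ and $\chi^{\BB}(h(\psi),\vec{c})$ lie in the same filters. Hence $(h(\varphi),h(\psi))\in\tilde{\mathbf{\Omega}}_{\BB}(\Fi_{\Sm}(\BB))=\mathrm{id}_{\BB}$, as needed. With this repair your proof goes through, and in fact supplies the justification that the paper leaves implicit.
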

\begin{proof}
As the agenda is $n$-pseudo-rich, there are formulas $\delta_{1},\dots,\delta_{m}\in X$ each of which is provably equivalent to a different variable $x_i$. Notice that this implies that the formulas $\delta_1, \ldots, \delta_m$ are not pairwise interderivable. So the $\equiv$-equivalence cells $[\delta_{1}],\dots, [\delta_{m}]$ are pairwise different, and moreover there exists a valuation $v:\Fm/{\equiv}\longrightarrow \BB$ such that $v(\delta_{i})=a_{i}$ for all $i\in \{1,\dots,m\}$.
Let $A:=v\circ\pi_{\upharpoonright X}$, where $\pi_{\upharpoonright X}:X\rightarrow\Fm/{\equiv}$ is the restriction of the canonical projection $\pi:\Fm\rightarrow\Fm/{\equiv}$ to $X$. Then clearly $A:X\rightarrow\BB$ is the required rational attitude function.
\end{proof}

%LEMMA2
\begin{lemma}\label{lemma2}
Let $X$ be an $n$-pseudo-rich agenda, $m\leq n$, $g\in \Lm$ be an $m$-ary connective and $\vec{a}_{1},\dots,\vec{a}_{m}\in \BB^{N}$.
Then there exist formulas $\delta_{1},\dots,\delta_{m}\in X$ in the agenda and a rational attitude profile $\vec{A}:X\longrightarrow \BB^{N}$ such that $\vec{A}(\delta_{j})=\vec{a}_{j}$ for each $j\in \{1,\dots,m\}$.
\end{lemma}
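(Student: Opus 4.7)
The plan is to reduce Lemma \ref{lemma2} to the previous lemma by applying it pointwise across the electorate $N$, taking care that a single witness set of formulas $\delta_1,\dots,\delta_m$ works uniformly for every individual.

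First I would use $n$-pseudo-richness of $X$ to fix, once and for all (before looking at the target vectors $\vec{a}_1,\dots,\vec{a}_m$), formulas $\delta_1,\dots,\delta_m\in X$ such that each $\delta_j$ is provably equivalent to a variable $x_j$, with $x_1,\dots,x_m$ pairwise distinct. By selfextensionality, the equivalence classes $[\delta_j]=[x_j]$ in $\Fm/{\equiv}$ are then pairwise distinct, and moreover they are classes of \emph{distinct free generators} of the Lindenbaum–Tarski algebra, so any assignment of their images into $\BB$ extends to a homomorphism $\Fm/{\equiv}\to\BB$.

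Next, for each individual $i\in N$, consider the tuple $((\vec{a}_1)_i,\dots,(\vec{a}_m)_i)\in\BB^m$. Exactly as in the proof of the preceding lemma, choose a valuation $v_i:\Fm/{\equiv}\longrightarrow\BB$ sending $[\delta_j]\mapsto (\vec{a}_j)_i$ for each $j\in\{1,\dots,m\}$ (and assigned arbitrarily on the remaining generators), and set $A_i:=v_i\circ\pi_{\upharpoonright X}$, where $\pi:\Fm\to\Fm/{\equiv}$ is the canonical projection. Each $A_i$ is a rational attitude function satisfying $A_i(\delta_j)=(\vec{a}_j)_i$. Then define $\vec{A}:=\{A_i\}_{i\in N}\in(\BB^X)^N$; this is rational by construction, and for each $j$ one has
\[
\vec{A}(\delta_j)=\{A_i(\delta_j)\}_{i\in N}=\{(\vec{a}_j)_i\}_{i\in N}=\vec{a}_j,
\]
as required.

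The only subtle point, and essentially the only place where I would slow down, is the uniformity step: one must choose the witnesses $\delta_1,\dots,\delta_m$ once, independent of $i$, and then freely adapt the valuation $v_i$ for each individual. This is fine precisely because the previous lemma's construction lets one prescribe the values of the $\delta_j$'s independently of which formulas were chosen as witnesses, so the same $\delta_j$'s serve simultaneously for all $i\in N$. Note also that, as in Lemma~1, the $m$-ary connective $g$ plays no role in either the hypothesis or the conclusion; it is carried along only because this lemma will later be applied to configurations involving $g(\delta_1,\dots,\delta_m)$.
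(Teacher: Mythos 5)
Your proof is correct and follows essentially the same route as the paper's: apply the previous lemma's construction pointwise for each individual $i\in N$ with the witnesses $\delta_1,\dots,\delta_m$ fixed in advance, then assemble the $A_i$ into the profile $\vec{A}$. Your explicit attention to the uniformity of the $\delta_j$'s across individuals is a point the paper's own (terser) proof glosses over, but it does not change the argument.
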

\begin{proof}
As the agenda is $n$-pseudo-rich, there are formulas $\delta_{1},\dots,\delta_{m}\in X$ each of which is provably equivalent to a different variable $x_i$. By previous lemma, for each $i\in N$, there exists a rational attitude function $A_{i}:X\longrightarrow \BB$ such that $A_{i}(\delta_{j})=\vec{a}_{j}(i)$ for each $j\in\{1,\dots,m\}$.
Thus it is easy to check that the sequence of attitudes $\vec{A}:=\{A_{i}\}_{i\in N}$%\marginnote{Maybe we change the brackets into round ones.}
 is  a rational profile such that $\vec{A}(\delta_{j})=\vec{a}_{j}$ for each $j\in\{1,\dots,m\}$.
\end{proof}

%PROP1
Recall that given that $X$ is $n$-pseudo rich, there exists a unique decision criterion for any strongly systematic attitude aggregator $F$ (cf.\ page \pageref{page:unique:decision:criterion}).

\begin{proposition}\label{prop1}
Let $F$ be a rational, universal and strongly systematic attitude aggregator. Then the decision criterion of $F$ is a homomorphism of $\SSS$-algebras.
\end{proposition}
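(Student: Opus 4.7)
The plan is to show directly that $f$ preserves every connective of $\Lm$. Observe first that since $\BB\in\Alg\Sm$ and $\Alg\Sm$ is closed under direct products, the product algebra $\BB^N$ lies in $\Alg\Sm$, so it makes sense to speak of $f:\BB^N\to\BB$ as a homomorphism of $\Sm$-algebras. Fix an $m$-ary connective $g\in\Lm$ with $m\leq n$, and fix arbitrary $\vec{a}_1,\dots,\vec{a}_m\in\BB^N$. I need to verify that $f(g^{\BB^N}(\vec{a}_1,\dots,\vec{a}_m)) = g^{\BB}(f(\vec{a}_1),\dots,f(\vec{a}_m))$. (When $m=0$ this reads $f(c^{\BB^N})=c^{\BB}$ for each constant, which is covered by the same argument via nullary $g$.)

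The first step is to realize the chosen tuple using Lemma \ref{lemma2}: pick $\delta_1,\dots,\delta_m\in X$, each provably equivalent to a distinct variable, and a rational profile $\vec{A}$ with $\vec{A}(\delta_j)=\vec{a}_j$. Universality of $F$ places $\vec{A}$ in $\mathrm{dom}(F)$. Because each $A_i$ is rational, it commutes with $g$ pointwise on $\bar{X}$; computing componentwise in $\BB^N$ then gives $\vec{A}(g(\delta_1,\dots,\delta_m))=g^{\BB^N}(\vec{a}_1,\dots,\vec{a}_m)$. Crucially, $g(\delta_1,\dots,\delta_m)\in\bar{X}$, so strong systematicity applies to this formula.

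The conclusion will follow by evaluating $F(\vec{A})(g(\delta_1,\dots,\delta_m))$ in two ways. On the one hand, strong systematicity together with the computation above rewrites it as $f(g^{\BB^N}(\vec{a}_1,\dots,\vec{a}_m))$. On the other hand, rationality of $F$ makes $F(\vec{A})$ extend to a homomorphism of $\Sm$-algebras, so the value equals $g^{\BB}(F(\vec{A})(\delta_1),\dots,F(\vec{A})(\delta_m))$, and a second application of strong systematicity at each $\delta_j\in X$ rewrites the arguments as $f(\vec{a}_j)$. Equating the two expressions yields the desired identity.

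The only subtlety worth flagging is partiality: one must check that $f$ is in fact defined on the tuples that appear, namely on each $\vec{a}_j$ and on $g^{\BB^N}(\vec{a}_1,\dots,\vec{a}_m)$. This is automatic from the construction, since $\vec{A}\in\mathrm{dom}(F)$ (by universality) and the defining equation \eqref{eq1} of a decision criterion forces $f$ to be defined wherever $\vec{A}$ sends a formula of $\bar{X}$; in particular, applying Lemma \ref{lemma2} with $m=1$ shows that $f$ is total on $\BB^N$. No further obstacle is expected: $n$-pseudo-richness is used exactly to ensure that the $\delta_j$ behave like free variables so that the witness $\vec{A}$ can be constructed, and the two-way computation is otherwise a bookkeeping exercise combining strong systematicity with the rationality of inputs and outputs of $F$.
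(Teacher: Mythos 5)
Your proof is correct and follows essentially the same route as the paper's: realize an arbitrary tuple via Lemma \ref{lemma2}, then evaluate $F(\vec{A})(g(\delta_{1},\dots,\delta_{m}))$ in two ways using rationality of the inputs and output together with strong systematicity. The only (harmless) differences are that you fold the constant case into the nullary-connective case where the paper treats it separately, and you add an explicit check that the decision criterion is total, which the paper leaves implicit.
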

\begin{proof}

By the strong systematicity of $F$, there exists a decision criterion $f:\BB^{N}\longrightarrow \BB$ of $F$. Let us show that under the assumptions of the proposition, $f$ is a homomorphism of $\SSS$-algebras, by showing that $f(c^{\BB^{N}})=c^{\BB}$ for each constant $c\in\Lm$, and that, for each $m$-ary connective $g\in\Lm$, and any $\vec{a}_{1},\dots,\vec{a}_{m}\in \BB^{N}$,

$$f(g^{\BB^{N}}(\vec{a}_{1},\dots,\vec{a}_{m}))=g^{\BB}(f(\vec{a}_{1}),\dots,f(\vec{a}_{m})).$$

%constant
Let $c\in \Lm$ be a constant and let $\vec{A}$ be a rational profile. By definition $c\in \bar{X}$.
Since $\vec{A}$ is rational, it can be extended to $\bar{X}$ so that $\vec{A}(c)=c^{\BB^{N}}$.
Moreover, as $F(\vec{A})$ is also rational, it can also be extended to $\bar{X}$ so that $F(\vec{A})(c)=c^{\BB}$.
Therefore, by $F$ being strongly systematic, we get:
$$f(c^{\BB^{N}})=f(\vec{A}(c))=F(\vec{A})(c)=c^{\BB}.$$

%connective
Let $g\in\Lm$ be an $m$-ary connective, where recall that $m\leq n$ and  let $\vec{a}_{1},\dots,\vec{a}_{m}\in \BB^{N}$.
By Lemma \ref{lemma2} and the $n$-pseudo-richness of the agenda, there are formulas $\delta_{1},\dots,\delta_{m}\in X$ and a rational profile $\vec{A}:X\longrightarrow \BB$ such that
\begin{equation}\label{equ:1}
\vec{A}(\delta_{j})=\vec{a}_{j} \,\,\, \text{ for each } j\in\{1,\dots,m\}.
\end{equation}

Notice that for each $i\in N$, $A_{i}$ being rational implies that it can be extended to $\bar{X}$ so that $A_{i}(g(\delta_{1},\dots,\delta_{m}))=g^{\BB}(A_{i}(\delta_{1}),\dots,A_{i}(\delta_{m}))=g^{\BB}(\vec{a}_{1}(i),\dots,\vec{a}_{m}(i))$.
Therefore, by the definition of operation the $g$ in the product algebra $\BB^{N}\in\Alg\Sm$ we have
\begin{equation}\label{equ:2}
\vec{A}(g(\delta_{1},\dots,\delta_{m}))=g^{\BB^{N}}(\vec{a}_{1},\dots,\vec{a}_{m}).
\end{equation}

By the assumption of $F$ being universal, it follows that $\vec{A}\in dom(F)$. Moreover, since $F$ is rational, $F(\vec{A})$ is a rational attitude function, hence it can be extended to $\bar{X}$ so that
\begin{equation}\label{equ:3}
F(\vec{A})(g(\delta_{1},\dots,\delta_{m}))=g^{\BB}(F(\vec{A})(\delta_{1}),\dots,F(\vec{A})(\delta_{m})).
\end{equation}

Finally, since $F$ is strongly systematic,
\begin{equation}\label{equ:4}
F(\vec{A})(\delta_{j})=f(\vec{A}(\delta_{j})) \,\,\, \text{ for each } j\in\{1,\dots,m\}, \text{ and }
\end{equation}
\begin{equation}\label{equ:5}
F(\vec{A})(g(\delta_{1},\dots,\delta_{m}))= f(\vec{A}(g(\delta_{1},\dots,\delta_{m}))).
\end{equation}

Hence,

\begin{center}
\begin{tabular}{r c l l}
$f(g^{\BB^{N}}(\vec{a}_{1},\dots,\vec{a}_{m}))$&=&$f(\vec{A}(g(\delta_{1},\dots,\delta_{m})))$ & \eqref{equ:2}\\
&=&$F(\vec{A})(g(\delta_{1},\dots,\delta_{m}))$& \eqref{equ:5}\\
&=&$g^{\BB}(F(\vec{A})(\delta_{1}),\dots,F(\vec{A})(\delta_{m}))$& \eqref{equ:3}\\
&=&$g^{\BB}(f(\vec{A}(\delta_{1})),\dots,f(\vec{A}(\delta_{m})))$& \eqref{equ:4}\\
&=&$g^{\BB}(f(\vec{a}_{1}),\dots,f(\vec{a}_{m})))$,& \eqref{equ:1}
\end{tabular}
\end{center}

as required.
\end{proof}
%PROP2
\begin{proposition}\label{prop2}
Let $f:\BB^{N}\nrightarrow \BB$ be a homomorphism of $\SSS$-algebras. Then the function $F:(\BB^{X})^{N} \nrightarrow \BB^{X}$, defined for any rational profile $\vec{A}$ and any $\varphi\in X$ by the following assignment: $$F(\vec{A})(\varphi)=f(\vec{A}(\varphi)),$$ is a rational, universal and strongly systematic attitude aggregator.
\end{proposition}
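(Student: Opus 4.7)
The plan is to verify the three properties in turn: universality, rationality of outputs, and strong systematicity. Universality is immediate: by construction $F$ is defined on every rational profile $\vec{A}$ (the formula $F(\vec{A})(\varphi)=f(\vec{A}(\varphi))$ makes sense since $\vec{A}(\varphi)\in\BB^N$ and $f$ is total on $\BB^N$, being a homomorphism of $\SSS$-algebras). Strong systematicity is essentially also built in: $f$ is the obvious candidate decision criterion, and the defining equation $F(\vec{A})(\varphi)=f(\vec{A}(\varphi))$ is exactly what is required on $X$; the extension to $\bar X$ will come for free once rationality is established.

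The heart of the argument is rationality of $F(\vec{A})$. The key fact to invoke is that $\Alg\Sm$ is closed under direct products, so $\BB^N\in\Alg\Sm$. Given a rational profile $\vec{A}$, each $A_i$ extends to a homomorphism $\bar{A}_i\colon \Fm/{\equiv}\to\BB$ of $\Sm$-algebras. These assemble into a single homomorphism
\[
\Phi\colon \Fm/{\equiv}\longrightarrow \BB^N,\qquad \Phi([\varphi])=(\bar{A}_i([\varphi]))_{i\in N},
\]
which is a homomorphism of $\Sm$-algebras because the operations on $\BB^N$ are defined coordinatewise. Composing with $f$ gives a homomorphism $f\circ\Phi\colon\Fm/{\equiv}\to\BB$ of $\Sm$-algebras. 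Let $\pi\colon\Fm\to\Fm/{\equiv}$ be the canonical projection. Then for every $\varphi\in X$,
\[
(f\circ\Phi\circ\pi)(\varphi)=f\bigl((\bar{A}_i([\varphi]))_{i\in N}\bigr)=f(\vec{A}(\varphi))=F(\vec{A})(\varphi),
\]
so $f\circ\Phi$ is the desired extension of $F(\vec{A})$ to $\Fm/{\equiv}$, witnessing that $F(\vec{A})$ is rational.

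For strong systematicity, I need to check that the equation $F(\vec{A})(\varphi)=f(\vec{A}(\varphi))$ continues to hold for all $\varphi\in\bar X$, where both sides are understood via the canonical extensions granted by rationality. On the left, $F(\vec{A})$ extends to $\bar X$ via the homomorphism $f\circ\Phi$ constructed above. On the right, $\vec{A}$ extends to $\bar X$ coordinatewise using the $\bar{A}_i$, which is exactly what $\Phi\circ\pi$ computes on representatives. Hence both sides coincide with $(f\circ\Phi\circ\pi)(\varphi)$, and the diagram commutes on $\bar X$.

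The main subtlety I anticipate is not a difficult calculation but a conceptual one: one has to justify that the two a priori different extensions of $F(\vec{A})$ and of $\vec{A}$ to $\bar X$ (guaranteed abstractly by rationality) coincide with the concrete homomorphisms $f\circ\Phi$ and $\Phi$ used in the argument. This is handled by the uniqueness of homomorphic extensions from $X$ (which, by $n$-pseudo-richness, contains enough formulas provably equivalent to variables) together with the coordinatewise definition of operations in $\BB^N\in\Alg\Sm$, so that everything collapses to the single homomorphism $f\circ\Phi$ of $\Sm$-algebras.
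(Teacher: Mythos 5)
Your proposal is correct and follows essentially the same route as the paper's own proof: the paper likewise extends the rational profile to a homomorphism $v\colon\Fm/{\equiv}\to\BB^{N}$ (your $\Phi$), composes with $f$, checks the restriction to $X$ agrees with $F(\vec{A})$, and then gets strong systematicity from the unique extension to $\bar{X}$. You merely make explicit the coordinatewise assembly of $\Phi$ from the $\bar{A}_i$ and the role of closure of $\Alg\Sm$ under products, which the paper leaves implicit.
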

\begin{proof}
By definition $F$ is a universal aggregator, and moreover its domain coincides with the set of all rational profiles.

Let $\vec{A}$ be a rational profile. Then it can be extended to a homomorphism $v:{\Fm/\equiv}\longrightarrow \BB^{N}$. Hence, $f\circ v$ is also a homomorphism. Let us show that the restriction of $f\circ v$ to the agenda $X$ coincides with $F(\vec{A})$. Indeed, for any $\varphi\in X$ we have $$f(v(\varphi))=f(\vec{A}(\varphi))=F(\vec{A})(\varphi),$$ as required. This shows that $F(\vec{A})$ is a rational attitude function, and hence $F$ is a rational attitude aggregator.

It immediately follows from the definition that $F$ is systematic. Let $\vec{A}\in dom(F)$. By definition $\vec{A}$ is rational, so it can be uniquely extended to $\bar{X}$. Therefore, by a similar argument as in previous paragraph, we can uniquely extend $F(\vec{A})$ to $\bar{X}$ so that for any $\varphi\in\bar{X}$, $F(\vec{A})(\varphi)=f(\vec{A}(\varphi))$. This shows that $F$ is also strongly systematic, which finishes the proof.
\end{proof}
Finally, the conclusion of the following corollary expresses a property which is a generalization of the Pareto condition (cf.\ \cite[Definition 3.7]{He13}).

%COROLLARY	
\begin{corollary}
If $F$ is universal, rational and strongly systematic, then for any constant $c\in\Ls$ and any $\varphi\in\Fm$, if $A_i(\varphi)=c^{\BB}$ for all $i\in N$, then $F(\vec{A})(\varphi)=c^{\BB}$.
\end{corollary}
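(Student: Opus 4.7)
The plan is to derive the corollary as an immediate consequence of Proposition \ref{prop1}. Since $F$ is universal, rational and strongly systematic, that proposition tells us that the (unique) decision criterion $f:\BB^{N}\to\BB$ of $F$ is a homomorphism of $\SSS$-algebras. In particular, $f$ preserves every constant symbol, so $f(c^{\BB^{N}})=c^{\BB}$ for each constant $c\in\Ls$.

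Next, I would unpack the hypothesis. Assuming $A_{i}(\varphi)=c^{\BB}$ for every $i\in N$, the $N$-tuple $\vec{A}(\varphi)=\{A_{i}(\varphi)\}_{i\in N}$ is the constant $N$-sequence with value $c^{\BB}$. By the componentwise definition of the operations in the product algebra $\BB^{N}\in\Alg\Sm$, this constant sequence is exactly $c^{\BB^{N}}$. (Here one implicitly uses that $\varphi$ lies in $X$, or at most in $\bar{X}$, so that $\vec{A}(\varphi)$ and $F(\vec{A})(\varphi)$ are both meaningful; for $\varphi\in\bar{X}\setminus X$ one uses that rational profiles and rational outputs uniquely extend to $\bar{X}$, as noted in Section \ref{Sec:framework}.)

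Finally, I would invoke strong systematicity together with the previous two observations to compute
\[
F(\vec{A})(\varphi)=f(\vec{A}(\varphi))=f(c^{\BB^{N}})=c^{\BB},
\]
which is the conclusion. There is essentially no hard step: the substantive work has already been done in Proposition \ref{prop1}, and the corollary merely packages the preservation of constants by the decision criterion into the classical language of a Pareto-type condition. The only point to take care of is that strong systematicity (rather than plain systematicity) is what licenses applying the identity $F(\vec{A})(\varphi)=f(\vec{A}(\varphi))$ to formulas that may belong to $\bar{X}$ but not necessarily to $X$, which is why the hypothesis ``strongly systematic'' in the statement cannot be weakened to ``systematic''.
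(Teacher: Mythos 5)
Your proposal is correct and follows essentially the same route as the paper's own proof: identify the constant $N$-sequence $\{c^{\BB}\}_{i\in N}$ with $c^{\BB^{N}}$, apply Proposition \ref{prop1} to get $f(c^{\BB^{N}})=c^{\BB}$, and conclude via the decision-criterion identity $F(\vec{A})(\varphi)=f(\vec{A}(\varphi))$. Your added remark about formulas in $\bar{X}\setminus X$ and why strong systematicity (rather than plain systematicity) is needed is a sensible clarification that the paper leaves implicit, but it does not change the argument.
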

\begin{proof}
Let $c\in\Ls$ and $\varphi\in\Fm$.
Notice that by definition of the product algebra, the sequence $\{c^{\BB}\}_{i\in N}$ is precisely $c^{\BB^{N}}$.
If $A_i(\varphi)=c^{\BB}$ for all $i\in N$, i.e.\ $\vec{A}(\varphi)=c^{\BB^{N}}$,
then by Proposition \ref{prop1},
$F(\vec{A})(\varphi)=f(\vec{A}(\varphi))=f(c^{\BB^{N}})=c^{\BB}$, as required.\end{proof}

\section{Applications}
\label{Sec:Arrow}
In the present section, we show how the setting in the present paper relates to existing settings in the literature. %different impossibility theorems can be obtained as special cases of the characterization results in the previous section.
%\marginnote{to be expanded on and improved}
%classical logic
\subsection{Arrow-type impossibility theorem for judgment aggregation}
Let $\Sm$ be the classical propositional logic. Its algebraic counterpart $\Alg\Sm=\mathbb{BA}$ is the variety of Boolean algebras. Let $\Lm=\{\neg, \vee\}$ be its language (the connectives $\wedge,\to,\leftrightarrow$ are definable from the primitive ones). Let $\BB=\mathbf{2}$ be the two-element Boolean algebra. Let $X\subseteq Fm_{\Lm}$ be a 2-pseudo-rich agenda.

By Propositions \ref{prop1} and \ref{prop2}, for every electorate $N$, there exists a bijection between rational, universal and strongly systematic attitude aggregators $F:(\mathbf{2}^{X})^{N}\longrightarrow \mathbf{2}^{X}$
\footnote{Note that in this case an alternative presentation of $F$ is $F:\mathcal{P}(X)^{N}\longrightarrow \mathcal{P}(X)$, which is the standard one.}
and Boolean homomorphisms $f:\mathbf{2}^{N}\longrightarrow \mathbf{2}$.

Recall that there is a bijective correspondence between Boolean homomorphisms $f:\mathbf{2}^{N}\longrightarrow\mathbf{2}$ and ultrafilters of $\mathbf{2}^N$. Moreover, if $N$ is finite, every ultrafilter of $\mathbf{2}^N$ is principal. In this case, a decision criterion corresponds to an ultrafilter exactly when it is dictatorial.

\subsection{An example of a natural interpretation for subjunctive implication}

In \cite{Di10}, Dietrich argues that, in order to reflect the meaning of connection rules (i.e.\  formulas of the form $p\rightarrow q$ or $p\leftrightarrow q $ such that $p$ and $q$ are conjunctions of atomic propositions or negated atomic propositions) as they are understood and used in natural language, the connective $\rightarrow$ should be interpreted subjunctively.That is, the formula $p\rightarrow q$ should not be understood as a statement about the actual
world, but about whether $q$ holds in hypothetical world(s) where $p$ holds,  depends on $q$'s truth value in possibly non-actual worlds.
Dietrich proposes that, in the context of connection rules, any such implication should satisfy the following conditions:
\begin{itemize}
\item[(a)] for any atomic propositions $p$ and $q$,  $p \rightarrow q$ is inconsistent
with $\{p, \neg q\}$ but consistent with each of  $\{p, q\}$  $\{\neg p, q\}$  $\{\neg p, \neg q\}$;

\item[(b)] for any  atomic propositions $p$ and $q$,  $\neg(p \rightarrow q)$ is consistent with each of $\{p, \neg q\}$,  $\{p, q\}$,  $\{\neg p, q\}$ and  $\{\neg p, \neg q\}$.
\end{itemize}
Clearly, the classical interpretation of $p\rightarrow q$ as $\neg p\vee q$ satisfies only condition (a) but not (b).  The subjunctive interpretation of $\rightarrow$ has been formalised in various settings based on possible-worlds semantics. One such setting, which is different from the one adopted by Dietrich's, is given by Boolean algebras with operators (BAOs). These are Boolean algebras endowed with an additional unary operation $\Box$ satisfying the identities $\Box 1 = 1$ and $\Box(x\wedge y) = \Box x\wedge \Box y$. Let us further restrict ourselves to the class of BAOs such that the inequality $\Box x\leq x$ is valid. This class coincides with $\Alg\SSS$, where $\SSS$ is the normal modal logic {\bf T} with the so-called local consequence relation. It is well known that {\bf T} is selfextensional and is complete w.r.t.\ the class of reflexive Kripke frames.  In this setting, let us stipulate that $p\rightarrow q$ is interpreted as $\Box(\neg p\vee q)$. 

It is easy to see that this interpretation satisfies both conditions (a) and (b).
To show that $p\rightarrow q$ is inconsistent with $\{p,\neg q\}$, observe that $\Box(\neg p\vee q)\wedge p\wedge \neg q\leq (\neg p\vee q)\wedge p\wedge \neg q = (\neg p\wedge (p\wedge \neg q))\vee (q\wedge (p\wedge \neg q))=\bot\vee\bot=\bot$. 

To show that $p\rightarrow q$ is consistent with $\{p,q\}$,% $\{\neg p,q\}$ and $\{\neg p,\neg q\}$, 
 consider the two-element BAO s.t.\ $\Box 1=1$ and $\Box 0=0$. The assignment mapping $p$ and $q$ to $1$ witnesses the required consistency statement. The remaining part of the proof is similar and hence is omitted.

Clearly, the characterization theorem given by Propositions \ref{prop1} and \ref{prop2} applies also to this setting. However, the main interest of this setting is given by the possibility theorems. It would be a worthwile future research direction to explore the interplay and the scope of these results.

\bibliographystyle{abbrv}
\bibliography{SocialChoice}{}

\begin{thebibliography}{10}

\bibitem{Ar63}
K.~J. Arrow.
\newblock {\em Social choice and individual values}, volume~12.
\newblock John Wiley, New York, 2nd edition edition, 1963.

\bibitem{Be77}
N.~D. Belnap~Jr.
\newblock A useful four-valued logic.
\newblock In {\em Modern uses of multiple-valued logic}, pages 5--37. Springer,
  1977.

\bibitem{CeJa99}
S.~Celani and R.~Jansana.
\newblock Priestley duality, a sahlqvist theorem and a goldblatt-thomason
  theorem for positive modal logic.
\newblock {\em Logic Journal of IGPL}, 7(6):683--715, 1999.

\bibitem{Di10}
F.~Dietrich.
\newblock {The possibility of judgment aggregation on agendas with subjunctive
  implications}.
\newblock {\em Journal of Economic Theory}, 145(2):603--638, March 2010.

\bibitem{DM10}
F.~Dietrich and P.~Mongin.
\newblock The premiss-based approach to judgment aggregation.
\newblock {\em Journal of Economic Theory}, 145(2):562--582, 2010.

\bibitem{Es13}
M.~Esteban.
\newblock {\em Duality Theory and Abstract Algebraic Logic}.
\newblock PhD thesis, Universitat de Barcelona,
  http://www.tdx.cat/handle/10803/125336, November 2013.

\bibitem{EsJa14}
M.~Esteban and R.~Jansana.
\newblock Priestley style duality for filter distributive congruential logics.
\newblock 2015.

\bibitem{FoJa96}
J.~Font and R.~Jansana.
\newblock {\em A general algebraic semantics for sentential logics}.
\newblock Lecture notes in logic. Springer-Verlag, 1996.

\bibitem{FoJaPi03}
J.~Font, R.~Jansana, and D.~Pigozzi.
\newblock A survey of abstract algebraic logic.
\newblock {\em Studia Logica}, 74(1/2):13--97, Jun. -- Jul. 2003.

\bibitem{Fo03}
J.~M. Font.
\newblock Beyond two.
\newblock chapter An Abstract Algebraic Logic View of Some Multiple-valued
  Logics, pages 25--57. Physica-Verlag GmbH, Heidelberg, Germany, Germany,
  2003.

\bibitem{FJa09}
J.~M. Font and R.~Jansana.
\newblock {\em A General Algebraic Semantics for Sentential Logics}, volume~7
  of {\em Lectures Notes in Logic}.
\newblock The Association for Symbolic Logic, Ithaca, N.Y., second edition,
  2009.

\bibitem{GJP10}
M.~Gehrke, R.~Jansana, and A.~Palmigiano.
\newblock Canonical extensions for congruential logics with the deduction
  theorem.
\newblock {\em Annals of Pure and Applied Logic}, 161(12):1502--1519, 2010.

\bibitem{He08}
F.~Herzberg.
\newblock Judgment aggregation functions and ultraproducts.
\newblock Institute of Mathematical Economics, University of Bielefeld, 2008.

\bibitem{He10}
F.~Herzberg.
\newblock Judgment aggregators and boolean algebra homomorphisms.
\newblock {\em Journal of Mathematical Economics}, 46(1):132 -- 140, 2010.

\bibitem{He13}
F.~Herzberg.
\newblock {Universal algebra for general aggregation theory: Many-valued
  propositional-attitude aggregators as MV-homomorphisms}.
\newblock {\em Journal of Logic and Computation}, 2013.

\bibitem{HE12}
F.~Herzberg and D.~Eckert.
\newblock Impossibility results for infinite-electorate abstract aggregation
  rules.
\newblock {\em Journal of Philosophical Logic}, 41:273---286, 2012.

\bibitem{Ja05}
R.~Jansana.
\newblock Selfextensional logics with implication.
\newblock In {\em Logica Universalis}, pages 65--88. Birkh{\"a}user Basel,
  2005.

\bibitem{JaPa06}
R.~Jansana and A.~Palmigiano.
\newblock Referential semantics: duality and applications.
\newblock {\em Reports on Mathematical Logic}, 41:63--93, 2006.

\bibitem{KiSo72}
A.~P. Kirman and D.~Sondermann.
\newblock Arrow's theorem, many agents, and invisible dictators.
\newblock {\em Journal of Economic Theory}, 5(2):267--277, 1972.

\bibitem{DiLi10}
C.~List and F.~Dietrich.
\newblock The aggregation of propositional attitudes: towards a general theory.
\newblock In {\em Oxford Studies in Epistemology}, volume~3, pages 215--234.
  Oxford University Press, 2010.

\bibitem{LP09}
C.~List and B.~Polak.
\newblock Introduction to judgment aggregation.
\newblock {\em Journal of Economic Theory}, 145(2):441 -- 466, 2010.
\newblock Judgment Aggregation.

\bibitem{NP02}
K.~Nehring and C.~Puppe.
\newblock Strategy-proof social choice on single-peaked domains: Possibility,
  impossibility and the space between.
\newblock University of California at Davis, 2002.

\bibitem{Wo79}
R.~W{\'o}jcicki.
\newblock Referential matrix semantics for propositional calculi.
\newblock {\em Bulletin of the Section of Logic}, 8(4):170--176, 1979.

\end{thebibliography}

\end{document}